\documentclass[a4paper,12pt]{article}

\usepackage{amscd}
\usepackage{amsfonts}
\usepackage{amsmath}
\usepackage{amssymb}
\usepackage{amsthm}
\usepackage{color} 
\usepackage[dvipdfm]{graphicx} 
\usepackage[T1]{fontenc} 
\usepackage{here} 
\usepackage{mathrsfs} 
\usepackage{txfonts} 
\usepackage[all]{xy} 
\usepackage{algorithm} 
\usepackage{algpseudocode} 
\usepackage{diagbox} 
\usepackage{ulem} 

\allowdisplaybreaks

\theoremstyle{plain}
\newtheorem{thm}{Theorem}[section]
\newtheorem{lmm}[thm]{Lemma}
\newtheorem{prp}[thm]{Proposition}

\theoremstyle{definition}
\newtheorem{dfn}[thm]{Definition}

\newtheorem{exm}[thm]{Example}

\newcommand{\vs}[1][0.2]{\vspace{#1in}\noindent\ignorespaces}
\newcommand{\ba}{\begin{array*}}
\newcommand{\ea}{\end{array*}}
\newcommand{\be}{\begin{eqnarray*}}
\newcommand{\ee}{\end{eqnarray*}}
\newcommand{\bi}{\begin{itemize}}
\newcommand{\ei}{\end{itemize}}
\newcommand{\bb}{\vs\begin{itembox}}
\newcommand{\eb}{\end{itembox}}
\newcommand{\bc}{\begin{center}}
\newcommand{\ec}{\end{center}}
\newcommand{\bs}{\vs\begin{screen}}
\newcommand{\es}{\end{screen}}

\def\ens#1{{\mathchoice{\left\{ #1 \right\}}{\{ #1 \}}{\{ #1 \}}{\{ #1 \}}}}
\def\set#1#2{{\mathchoice{\left\{ #1 \middle| #2 \right\}}{\{ #1 \mid #2 \}}{\{ #1 \mid #2 \}}{\{ #1 \mid #2 \}}}}
\def\r#1{\text{\rm #1}}

\def\Bigv#1{\left| #1 \right|}
\def\v#1{{\mathchoice{\Bigv{#1}}{| #1 |}{| #1 |}{| #1 |}}}
\def\Bign#1{\left\| #1 \right\|}
\def\n#1{{\mathchoice{\Bign{#1}}{\| #1 \|}{\| #1 \|}{\| #1 \|}}}

\def\tl#1{\tilde{#1}{}}

\def\optimise#1#2{\be \begin{array}{ll} \r{minimise} & #1 \\ \r{subject to} & #2 . \end{array} \ee}

\newcommand{\bC}{\mathbb{C}}

\newcommand{\bN}{\mathbb{N}}

\newcommand{\bQ}{\mathbb{Q}}
\newcommand{\bR}{\mathbb{R}}

\newcommand{\bZ}{\mathbb{Z}}

\newcommand{\cP}{\mathscr{P}}

\newcommand{\C}{\bC}

\newcommand{\N}{\bN}
\newcommand{\Q}{\bQ}
\newcommand{\R}{\bR}
\newcommand{\Z}{\bZ}

\newcommand{\Fp}{\mathbb{F}_p}

\newcommand{\Qp}{\mathbb{Q}_p}

\newcommand{\Zp}{\mathbb{Z}_p}

\newcommand{\CO}{\r{CO}}

\newcommand{\dom}{\r{dom}}

\newcommand{\im}{\r{im}}

\newcommand{\arith}{\r{ar}}
\newcommand{\DNN}{\r{DNN}}
\newcommand{\Graph}{\r{Graph}}
\newcommand{\LR}{\r{LR}}
\newcommand{\MaxCut}{\r{MaxCut}}
\newcommand{\MC}{\r{MC}}
\newcommand{\OP}{\r{OP}}
\newcommand{\SNN}{\r{SNN}}

\algnewcommand\algorithmicbreak{{\bf break}}
\algnewcommand\Break{\algorithmicbreak{}}
\algnewcommand\algorithmiccontinue{{\bf continue}}
\algnewcommand\Continue{\algorithmiccontinue{}}

\title{Generalisation of Baker's Forcing Method to Arbitrary Prime and NP-hardness of Several $p$-adic Optimisations}
\author{Tomoki Mihara}
\date{}

\begin{document}

\maketitle
\begin{abstract}
G.\ D.\ Baker formulated a forcing method to interpret integer optimisation problem into $2$-adic linear regression, and proved the NP-hardness of $2$-adic linear regression. We generalise the forcing method to a wider class of $p$-adic optimisation for the case where $p$ is not necessarily $2$, and prove the NP-hardness of $p$-adic linear regression, the NP-hardness of $2$-adic dynamic neural network by S.\ Albeverio, A.\ Khrennikov, and B.\ Tirrozi, and the NP-hardness of a partial generalisation of the $p$-adic optimisation problem associated to van der Put neural network by G.\ L.\ R.\ N'guessan.
\end{abstract}

\tableofcontents

\section{Introduction}
\label{Introduction}

Let $p$ be a prime number. The notion of $p$-adic numbers is introduced by K.\ Hensel in 1897 in \cite{Hen97}, and plays a central role in modern number theory. Recently, the $p$-adic numbers are used also in other branches of science, because of many significant similarities to and differences from the real numbers. See Introductions of \cite{Bra25} and \cite{Mih26-1} for history and basic knowledge respectively on applications of $p$-adic numbers in computer science.

\vs
S.\ Albeverio, A.\ Khrennikov, and B.\ Tirrozi formulated and studied $2$-adic neural network called {\it $2$-adic dynamic neural network} in \cite{AKT99} and \cite{KT00}. Starting from these innovative studies, people proposed and studied various optimisation problems related to $p$-adic numbers: {\it $p$-adic cellular neural network} by W.\ A.\ Z\'u\~niga-Galindo, B.\ A.\ Zambrano-Luna, and B.\ Dibba in \cite{ZZ23} and \cite{ZZB24}, an alternative of $p$-adic neural network by an explicit choice of a homeomorphism in Brouwer's characterisation theorem (cf.\ \cite{Bro10}) by A.\ P.\ Zubarev in \cite{Zub25-1} and \cite{Zub25-2}, {\it van der Put neural network} by G.\ L.\ R.\ N'guessan in \cite{Ngu25}, and {\it $p$-adic character neural network} by us in \cite{Mih26-3}.

\vs
Since studies of $p$-adic optimisation just have a short history compared to those of real optimisation, not so many were known for $p$-adic optimisation before 2020. For example, E.\ Amaldi and V.\ Kann showed that the maximal feasible subsystem problem of linear equations over the finite field $\Fp$ is APX-complete, i.e.\ complete for the class of optimisation problems which allow constant-factor approximations, in \cite{AK95} Proposition A.1. We note that the optimisation problem of modulo $p$ linear regression for the $\ell^1$-norm of the trivial valuation on $\Fp$ is identical to the maximal feasible subsystem problem of linear equations over $\Fp$ and is included in the optimisation problem of $p$-adic linear regression for $\ell^q$-norm of the $p$-adic valuation on $\Qp$ for $q \in (0,\infty)$. See also Introduction of \cite{Mih26-4} for other preceding studies on $p$-adic optimisation.

\vs
However, as if they synchronised with the evolution of various studies on $p$-adic neural networks, many results on $p$-adic optimisation appeared after 2020. For example, A.\ F.\ T.\ Martins showed polynomial-time algorithms for $p$-adic linear regression for $\ell^{\infty}$-norm in \cite{Mar25}. G.\ D.\ Baker, S.\ Mccallum, and D.\ Pattinson gave a lower bound of the number of sample points in a hyperplane optimal of $p$-adic linear regression for $\ell^1$-norm in \cite{BMP25}. Further, G.\ D.\ Baker proved that the class of $p$-adic linear regressions for $\ell^1$-norm with varying $p$ is NP-hard in the sense that the maximum cut problem can be reduced to a specific instance of $2$-adic linear regressions in \cite{Bak26}. We formulated a new $p$-adic optimisation problem as an infinitesimal limit of the least squares method, invented and compared various heuristic algorithms for $p$-adic polynomial regression with thorough analysis of their time and space complexity in \cite{Mih26-1}. Further, we formulated {\it $p$-adic principal component analyses} as dimensionality reduction methods in $p$-adic vector spaces in \cite{Mih26-2}. Moreover, we introduced a new probabilistic algorithm of $p$-adic linear regression in \cite{Mih26-4}. Furthermore, we introduced a new heuristic algorithm for $p$-adic manifold learning and propose a $p$-adic benchmark task in \cite{Mih26-5}.

\vs
In this paper, we formulate the notion of NP-hardness specialised to a class of $p$-adic optimisation problems, and prove the NP-hardness of $p$-adic linear regression as an extension of the result in \cite{Bak26}. The fundamental tool is an abstract generalisation of forcing method in \cite{Bak26}, and hence the proof works in a parallel way. Similarly, we prove the NP-hardness of $p$-adic dynamic neural network, and the NP-hardness of a partial generalisation of the $p$-adic optimisation problem associated to van der Put neural network, which we call {\it $p$-adic smooth neural network}.

\vs
We briefly explain contents of this paper. In \S \ref{Convention}, we introduce convention for this paper. In \S \ref{p-adic Optimisation}, we introduce $p$-adic linear regression and $2$-adic dynamic neural network, and formulate $p$-adic smooth neural network. In \S \ref{Forcing Method}, we formulate the generalisation of Baker's forcing method. In \S \ref{NP-hardness}, we formulate the notion of NP-hardness of a class of $p$-adic optimisation problems, and show the NP-hardness of $p$-adic linear regression, $p$-adic dynamic neural network, and $p$-adic smooth neural network.

\section{Convention}
\label{Convention}

We denote by $\N$ the set of non-negative integers. For a set $X$, we denote by $\cP(X)$ the set of subsets of $X$. For a set $X$, we denote by $\# X$ the cardinality of $X$. For a set $X$, an $x_0 \in X$, and a binary relation $R$ on $X$, we set $X_{R x_0} \coloneqq \set{x \in X}{x R x_0}$. For sets $X$ and $Y$, we denote by $X^Y$ the set of maps $Y \to X$. We note that every $n \in \N$ is identified with $\N_{< n}$ in set theory, and hence $X^n$ formally means $X^{\N_{< n}}$, which is naturally identified with the set of $n$-tuples in $X$. For a map $f$, we denote by $\dom(f)$ the domain of $f$.

\vs
Throughout the paper, $p$ denotes a fixed prime number. We denote by $\Qp$ the field of $p$-adic numbers, and by $\Zp$ the ring of $p$-adic integers. We fix a $p$-adic absolute value $\v{\cdot} \colon \Qp \to \R_{\geq 0}$. We note that the $p$-adic absolute value with the normalisation $\v{p} = p^{-1}$ is frequently denoted by $\v{\cdot}_p$, but we do not use the convention because we do not deal with the Euclidean absolute value $\C \to \R_{\geq 0}$.

\vs
For an $(n,m) \in \Z \times \N_{> 0}$, we denote by $n \bmod m$ the unique element of $\N_{< m}$ congruent to $n$ modulo $m$. Let $q \in \N$ be a power of $p$. For an $x \in \Zp$, we denote by $x \bmod q \in \N_{< q}$ the unique element of $\N_{< q}$ congruent to $x$ modulo $q$. Although $n \bmod q$ is multiply defined for an $n \in \Z$, there is no ambiguity because the two results coincide with each other. For a $\vec{x} = (x_i)_{i=0}^{n-1} \in \Zp^n$ with $n \in \N$, we set $\vec{x} \bmod q \coloneqq (x_i \bmod q)_{i=0}^{n-1}$. For an $X = (\vec{x}_i)_{i=0}^{n-1} \in (\Zp^m)^n$ with $(n,m) \in \N^2$, we set $X \bmod q \coloneqq (\vec{x}_i \bmod q)_{i=0}^{n-1}$.

\vs
For a set $X$ and a $U \in \cP(X)$, we abuse the convention $1_U$ to mean the characteristic function $X \to \Qp$ of $U$ without specifying $X$ and $p$. For example, $1_{\ens{0,1}}$ regarded as an element of $\Qp^3 = \Qp^{\N_{< 3}}$ stands for the vector $(1,1,0)$.

\section{$p$-adic Optimisation}
\label{p-adic Optimisation}

A {\it $p$-adic optimisation} is a map $\ell \colon \Qp^n \to \R_{\geq 0} \sqcup \ens{\infty}$ with $n \in \N$. Let $\ell$ be a $p$-adic optimisation. Set $\dim(\ell) \coloneqq \dim_{\Qp} \dom(\ell)$. For a $C \in \cP(\dom(\ell))$, we denote by $\OP(\ell,C)$ the following optimisation problem:
\optimise{\ell(\vec{c})}{\vec{c} \in C}
We abbreviate $\OP(\ell,\dom(\ell))$ to $\OP(\ell)$. We give three examples of $p$-adic optimisations.

\begin{exm}[$p$-adic linear regression]
For an $X = (\vec{x}_i)_{i=0}^{N-1} \in (\Qp^D)^N$ and a $\vec{y} = (y_i)_{i=0}^{N-1} \in \Qp^N$ with $(N,D) \in \N^2$, we denote by $\ell^{\LR}_{X,\vec{y}}$ the $p$-adic optimisation
\be
\Qp^D & \to & \R_{\geq 0} \sqcup \ens{\infty} \\
\vec{c} & \mapsto & \sum_{i=0}^{N-1} \v{y_i - \vec{c} \cdot \vec{x}_i},
\ee
where $\cdot$ denotes the standard inner product on $\Qp^D$, and call it {\it the $p$-adic linear regression} with input data $X$ and output data $\vec{y}$.
\end{exm}

\begin{exm}[$2$-adic dynamical neural network]
Let $E \in \N$. Let $(n,m) \in \N^2$, $A = ((A_{i,j})_{j=0}^{m-1})_{i=0}^{n-1} \in (\Q_2^m)^n$, and $\vec{b} = (b_j)_{j=0}^{m-1} \in \Q_2^m$. We denote by $f^{\DNN}_{E,A,\vec{b}}$ the map
\be
\Q_2^n & \to & \Q_2^m \\
(x_i)_{i=0}^{n-1} & \mapsto & \left( 1 - 1_{2^E \Z_2} \left( \left( \sum_{i=0}^{n-1} x_i A_{i,j} \right) - b_j \right) \right)_{j=0}^{m-1}.
\ee
For an $X = (\vec{x}_{\alpha})_{\alpha=0}^{N-1} \in (\ens{0,1}^n)^N$ and a $Y = (\vec{y}_{\alpha})_{\alpha=0}^{N-1} \in (\ens{0,1}^m)^N$ with $N \in \N$, we denote by $\ell^{\DNN}_{E,X,Y}$ the $2$-adic optimisation
\be
\Q_2^{m(n+1)} \cong (\Q_2^m)^n \times \Q_2^m & \to & \R_{\geq 0} \sqcup \ens{\infty} \\
(A,\vec{b}) & \mapsto & 
\left\{
\begin{array}{ll}
\sum_{\alpha=0}^{N-1} \n{\vec{y}_{\alpha} - f^{\DNN}_{E,A,\vec{b}}(\vec{x}_{\alpha})} & ((A,\vec{b}) \in (\Z_2^m)^n \times \Z_2^m) \\
\infty & ((A,\vec{b}) \notin (\Z_2^m)^n \times \Z_2^m)
\end{array}
\right.,
\ee
where $\n{\cdot}$ denotes the $\ell^1$-norm on $\Q_2^m$ with respect to the $2$-adic valuation $\v{\cdot}$, and call it {\it the $2$-adic dynamical neural network} with input data $X$, output data $Y$, and precision $E$. We note that the original definition in \cite{KT00} \S 2 rescales $\ell^{\DNN}_{E,X,Y}$ by $N^{-1}$. Also we note that the study of learning process in \cite{KT00} \S 2 implicitly focuses only on the case $\vec{b} = (0)_{j=0}^{m-1}$ from the middle of the discussion, but we consider the full generality.
\end{exm}

\begin{exm}[van der Put neural network]
Let $D \in \N$. We denote by $\CO(\Zp^D)$ the set of clopen subsets of $\Zp^D$, and by $B_{p,D}$ the set of maps $B \colon \N_{< M} \to \CO(\Zp^D)$ with $M \in \N$. Let $B \in B_{p,D}$. Set $M \coloneqq \# \dom(B) \in \N$. For an $X = (\vec{x}_i)_{i=0}^{N-1} \in (\Qp^D)^N$ and a $\vec{y} = (y_i)_{i=0}^{N-1} \in \Qp^N$, we denote by $\ell^{\SNN}_{B,X,\vec{y}}$ the $p$-adic optimisation
\be
\Qp^M & \to & \R_{\geq 0} \sqcup \ens{\infty} \\
\vec{c} = (c_j)_{j=0}^{M-1} & \mapsto & \sum_{i=0}^{N-1} \v{y_i - \sum_{j=0}^{M-1} c_j 1_{B(j)}(\vec{x}_i)},
\ee
and call it {\it the $p$-adic smooth neural network} with input data $X$, output data $\vec{y}$, and support $B$. When $D = 1$ and $B$ is an initial segment of the van der Put basis, then it is identical to the optimisation problem associated to the van der Put neural network with $\alpha = 0$ in \cite{Ngu25} (although $\alpha = 0.01$ throughout the paper other than Theorem A.1). Therefore, $p$-adic smooth neural network is a partial generalisation of the optimisation problem associated to van der Put neural network.
\end{exm}

\section{Forcing Method}
\label{Forcing Method}

Let $\ell$ be a $p$-adic optimisation. For an $E \in \N$ and a $C \in \cP(\N_{< p^E}^{\dim(\ell)})$, we say that $\ell$ is {\it $(E,C)$-reducible} if it satisfies the following:
\bi
\item[(1)] An optimal solution of $\OP(\ell,C)$ is an optimal solution of $\OP(\ell)$.
\item[(2)] For any optimal solution $\vec{c} \in \dom(\ell)$ of $\OP(\ell)$, $\vec{c}$ belongs to $C + p \Zp^{\dim(\ell)}$ and $\vec{c} \bmod p^E$ is an optimal solution of $\OP(\ell,C)$.
\ei
By the definition, $(E,C)$-reducibility implies $(E,C')$-reducibility for any $E \in \N$ and any $(C,C') \in \cP(\N_{< p^E}^{\dim(\ell)})^2$ with $C \subset C'$.

\begin{prp}
\label{optimality restriction}
For any $E \in \N$ and any $C \in \cP(\N_{< p^E}^{\dim(\ell)})$, if the following hold, then $\ell$ is $(E,C)$-reducible:
\bi
\item[(1)] For any $\vec{c} \in \dom(\ell) \setminus (C + p^E \Zp^{\dim(\ell)})$, there exists a $\vec{d} \in C$ such that $\ell(\vec{d}) < \ell(\vec{c})$.
\item[(2)] Every $\vec{c} \in C + p^E \Zp^{\dim(\ell)}$ satisfies $\ell(\vec{c} \bmod p^E) \leq \ell(\vec{c})$.
\ei
\end{prp}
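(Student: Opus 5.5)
The plan is to compare everything with the minimum of $\ell$ over the finite set $C$. Since $C \subset \N_{< p^E}^{\dim(\ell)}$ is finite, $\OP(\ell,C)$ has an optimal solution as soon as $C \neq \emptyset$. Nonemptiness should follow from the hypotheses: if $C = \emptyset$, then every point of $\dom(\ell)$ lies outside $C + p^E \Zp^{\dim(\ell)}$. Condition (1) would then require a $\vec{d} \in C = \emptyset$, which is impossible because $\dom(\ell) = \Qp^n$ is nonempty. So I set $m \coloneqq \min_{\vec{d} \in C} \ell(\vec{d}) \in \R_{\geq 0} \sqcup \ens{\infty}$.

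The key step is to show that $m \leq \ell(\vec{c})$ for every $\vec{c} \in \dom(\ell)$, splitting into two cases.
\begin{itemize}
\item If $\vec{c} \notin C + p^E \Zp^{\dim(\ell)}$, condition (1) gives a $\vec{d} \in C$ with $\ell(\vec{d}) < \ell(\vec{c})$, hence $m < \ell(\vec{c})$. This strict inequality makes sense in $\R_{\geq 0} \sqcup \ens{\infty}$ with the obvious order.
\item If $\vec{c} \in C + p^E \Zp^{\dim(\ell)}$, write $\vec{c} = \vec{c}_0 + p^E \vec{z}$ with $\vec{c}_0 \in C$. Since the entries of $\vec{c}_0$ lie in $\N_{< p^E}$, uniqueness in the definition of $\bmod$ gives $\vec{c} \bmod p^E = \vec{c}_0 \in C$. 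Condition (2) then yields $m \leq \ell(\vec{c}_0) \leq \ell(\vec{c})$.
\end{itemize}
Consequently $m$ is the minimum of $\ell$ on all of $\dom(\ell)$. Any optimal solution of $\OP(\ell,C)$ attains $m$ and is therefore optimal for $\OP(\ell)$, which proves property (1) of $(E,C)$-reducibility.

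For property (2), let $\vec{c}$ be an optimal solution of $\OP(\ell)$, so $\ell(\vec{c}) = m$. If $\vec{c}$ were outside $C + p^E \Zp^{\dim(\ell)}$, the first case above would give $m < \ell(\vec{c}) = m$, a contradiction. Hence $\vec{c} \in C + p^E \Zp^{\dim(\ell)}$. By the second case, $\vec{c} \bmod p^E \in C$ and $\ell(\vec{c} \bmod p^E) \leq \ell(\vec{c}) = m$. So $\vec{c} \bmod p^E$ attains the minimum over $C$ and is an optimal solution of $\OP(\ell,C)$.

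Finally I must pass from $C + p^E \Zp^{\dim(\ell)}$ to $C + p \Zp^{\dim(\ell)}$ as required in the definition. For $E \geq 1$ this is the inclusion $p^E \Zp \subset p \Zp$. I expect the only real obstacle to be the degenerate case $E = 0$. There $C \subset \ens{\vec{0}}$ and $p^0 \Zp = \Zp$, so the argument only yields $\vec{c} \in C + \Zp^{\dim(\ell)}$. I would check whether the intended reading in the definition is $C + p^E \Zp^{\dim(\ell)}$, which is what the proof naturally gives. Otherwise I would treat $E = 0$ separately, or assume $E \geq 1$ as in all later applications. The rest is routine bookkeeping with $\bmod p^E$ and with the value $\infty$.
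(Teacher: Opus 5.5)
Your argument is correct and is essentially the paper's proof: conditions (1) and (2) show that every point of $\dom(\ell)$ is dominated by a point of $C$, and any optimal $\vec{c}$ must lie in $C + p^E \Zp^{\dim(\ell)}$ with $\vec{c} \bmod p^E$ optimal on $C$; your check that $C \neq \emptyset$ is extra care the paper skips. Your $E = 0$ caveat is genuine, since the paper silently uses $C + p^E \Zp^{\dim(\ell)}$ where the definition has $C + p \Zp^{\dim(\ell)}$ (harmless for $E \geq 1$, the only case the paper applies). For $E = 0$, however, no separate treatment can rescue the literal statement: take $\ell(c) = 0$ on $\Zp$, $\ell(c) = 1$ on $\Qp \setminus \Zp$, and $C = \ens{0}$; both hypotheses hold, yet the optimal point $1$ is not in $p \Zp$.
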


\begin{proof}
For any $\vec{c} \in \dom(\ell)$, there exists a $\vec{d} \in C$ such that $\ell(\vec{d}) \leq \ell(\vec{c})$ by the conditions (1) and (2) in the assertion. This implies the condition (1) in the definition of $(E,C)$-reducibility.

\vs
Let $\vec{c} \in \dom(\ell)$ be an optimal solution of $\OP(\ell)$. By the condition (1) in the assertion, we have $\vec{c} \in C + p^E \Zp^{\dim(\ell)}$, i.e.\ $\vec{c} \bmod p^E$ belongs to $C$. By the optimality of $\vec{c}$ for $\OP(\ell)$ and the condition (2), we have $\ell(\vec{c}) = \ell(\vec{c} \bmod p^E)$, and hence $\ell(\vec{c} \bmod p^E)$ and is an optimal solution of $\OP(\ell,C)$.
\end{proof}

Let $\vec{\ell} = (\ell_i)_{i=0}^{N-1}$ be a sequence with an $N \in \N$ of $p$-adic optimisations such that $\dim(\ell_i)$ is independent of $i \in \N_{< N}$. We define $\dim(\vec{\ell})$ as $\dim(\ell_0)$ if $N > 0$ and $0$ otherwise, and $\Sigma(\vec{\ell})$ as the $p$-adic optimisation
\be
\Qp^{\dim(\vec{\ell})} & \to & \R_{\geq 0} \sqcup \ens{\infty} \\
\vec{c} & \mapsto & \sum_{i=0}^{N-1} \ell_i(\vec{c}).
\ee
For a $C \in \cP(\N_{< p}^{\dim(\vec{\ell})})$, we say that $\vec{\ell}$ is {\it concentrated on $C$} if for any $(i,\vec{c}) \in \N_{< N} \times C + p \Zp^{\dim(\vec{\ell})}$, the following are equivalent:
\bi
\item[(1)] $\ell_i(\vec{c}) \geq 1$
\item[(2)] $\ell_i(\vec{c} \bmod p) = 1$
\item[(3)] $\ell_i(\vec{c} \bmod p) > 0$
\ei
The following is a generalisation of Baker's forcing method for $2$-adic linear regression appearing in the proof of \cite{Bak26} Theorem 3.1:

\begin{thm}
\label{Baker}
Let $\vec{B} = (B_d)_{d=0}^{\dim(\vec{\ell})-1} \in (\cP(\N_{< p}) \setminus \ens{\emptyset})^{\dim(\vec{\ell})}$ and $M \in \N$. Set $C \coloneqq \prod_{d=0}^{\dim(\vec{\ell})-1} B_d$. Suppose that $\sup \Sigma(\vec{\ell})(C) < M$ and $\vec{\ell}$ is concentrated on $C$. Let $w \colon \Qp \to \R_{\geq 0}$ denote either $\v{\cdot}$ or $\v{1_{\Qp \setminus p \Zp}(\cdot)}$. We denote by $\ell$ the $p$-adic optimisation
\be
\Qp^{\dim(\vec{\ell})} & \to & \R_{\geq 0} \sqcup \ens{\infty} \\
\vec{c} = (c_d)_{d=0}^{\dim(\vec{\ell})-1} & \mapsto & \Sigma(\vec{\ell}) + M \sum_{d=0}^{\dim(\vec{\ell})-1} \sum_{b \in B_d} w(c_d - b).
\ee
Then $\ell$ is $(1,C)$-reducible, and a $\vec{c} \in C$ is an optimal solution of $\OP(\ell)$ if and only if it is an optimal solution of $\OP(\Sigma(\vec{\ell}),C)$.
\end{thm}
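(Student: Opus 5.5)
The plan is to apply Proposition \ref{optimality restriction} with $E = 1$, which reduces the problem to checking its two conditions for the penalised optimisation $\ell$. Write $P(\vec{c}) \coloneqq \sum_{d} \sum_{b \in B_d} w(c_d - b)$, so $\ell = \Sigma(\vec{\ell}) + M P$. The basic observation concerns $w(x)$ for $x \in \Zp$. If $w = \v{\cdot}$, then $w(x) \le 1$, with equality exactly when $x \notin p\Zp$. If $w = \v{1_{\Qp \setminus p\Zp}(\cdot)}$, then $w(x) = 1$ when $x \notin p\Zp$ and $w(x) = 0$ otherwise. In both cases $w(x) \ge 1$ whenever $x \notin \Zp$ or $x \notin p\Zp$ (for the second choice, $w(x) = 1$ also when $x \notin \Zp$).

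For a coordinate $c_d$ we count the terms with $c_d - b \notin p\Zp$. If $c_d \notin \Zp$, all $\# B_d$ terms are at least $1$. If $c_d \in \Zp$ with $c_d \bmod p \notin B_d$, again all $\# B_d$ terms equal $1$. If $c_d \bmod p = b_0 \in B_d$, the terms with $b \ne b_0$ equal $1$, and the term $b = b_0$ satisfies $w(c_d - b_0) \ge 0$, with value $0$ at $c_d = b_0$. Hence $P(\vec{c}) \ge P_0 \coloneqq \sum_d (\# B_d - 1)$ for every $\vec{c}$, with equality on $C$. Moreover, $P(\vec{c}) \ge P_0 + 1$ whenever $\vec{c} \notin C + p\Zp^{\dim(\vec{\ell})}$.

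\emph{Condition (1).} Take $\vec{c} \notin C + p\Zp^{\dim(\vec{\ell})}$ and any $\vec{d} \in C$. Then
\[
\ell(\vec{d}) = \Sigma(\vec{\ell})(\vec{d}) + M P_0 < M + M P_0 \le M P(\vec{c}) \le \ell(\vec{c}),
\]
where the middle inequality uses $\sup \Sigma(\vec{\ell})(C) < M$. If $\ell(\vec{c}) = \infty$, the inequality is trivial.

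\emph{Condition (2).} Take $\vec{c} \in C + p\Zp^{\dim(\vec{\ell})}$. Then $P(\vec{c}) \ge P_0 = P(\vec{c} \bmod p)$, so it suffices to show $\ell_i(\vec{c} \bmod p) \le \ell_i(\vec{c})$ for each $i$. This is where concentration is used, and it is the only delicate step. If $\ell_i(\vec{c} \bmod p) > 0$, then condition (3) holds, so (2) gives $\ell_i(\vec{c} \bmod p) = 1$ and (1) gives $\ell_i(\vec{c}) \ge 1$. If $\ell_i(\vec{c} \bmod p) = 0$, the inequality is trivial. Summing over $i$ yields condition (2), and Proposition \ref{optimality restriction} gives $(1,C)$-reducibility.

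\emph{The final equivalence.} On $C$ we have $\ell = \Sigma(\vec{\ell}) + M P_0$, which is $\Sigma(\vec{\ell})$ shifted by a constant. So optimality for $\OP(\ell, C)$ is the same as optimality for $\OP(\Sigma(\vec{\ell}), C)$. By reducibility condition (1), an optimal solution of $\OP(\ell, C)$ is optimal for $\OP(\ell)$; this needs an optimum on $C$ to exist, which holds since $C$ is finite and nonempty. Hence optimal values agree, i.e.\ $\min_C \ell = \inf \ell$. Consequently, a $\vec{c} \in C$ is optimal for $\OP(\ell)$ if and only if it attains $\min_C \ell$, which holds if and only if it is optimal for $\OP(\Sigma(\vec{\ell}), C)$.
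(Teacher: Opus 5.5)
Your proof is correct and follows essentially the same route as the paper. You verify the two hypotheses of Proposition \ref{optimality restriction} with $E = 1$, using the per-coordinate penalty count for condition (1) and the concentration property for condition (2), and then use that the penalty is constant on $C$ to transfer optimality; incidentally, your bound $\ell(\vec{c}) \geq M(P_0+1)$ for $\vec{c} \notin C + p\Zp^{\dim(\vec{\ell})}$ is stated more accurately than the paper's displayed equality, which in general only holds as an inequality (e.g.\ when $w = \v{\cdot}$ and some $c_d \notin \Zp$, or when several coordinates lie outside $B_d + p\Zp$).
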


\begin{proof}
The proof is essentially the same as that of \cite{Bak26} Theorem 3.1. Since $\vec{B}$ is a sequence of non-empty sets, $C$ is non-empty. Taking a $\vec{c}_0 \in C$, we have
\be
\ell(\vec{c}_0) & = & \Sigma(\vec{\ell})(\vec{c}_0) + M \sum_{d=0}^{\dim(\vec{\ell})-1} (\# B_d - 1) \leq \sup \Sigma(\vec{\ell})(C) + M \sum_{d=0}^{\dim(\vec{\ell})-1} (\# B_d - 1) \\
& < & M \left( 1 + \sum_{d=0}^{\dim(\vec{\ell})-1} (\# B_d - 1) \right).
\ee
Let $\vec{c} \in \Qp^{\dim(\vec{\ell})}$. If $\vec{c} \notin C + p \Zp^{\dim(\vec{\ell})}$, then we have
\be
\ell(\vec{c}) = \Sigma(\vec{\ell})(\vec{c}) + M \left( 1 + \sum_{d=0}^{\dim(\vec{\ell})-1} (\# B_d - 1) \right) > \ell(\vec{c}_0).
\ee
Suppose $\vec{c} \in C + p \Zp^{\dim(\vec{\ell})}$. Set $\vec{b} = (b_d)_{d=0}^{\dim(\vec{\ell})-1} \coloneqq \vec{c} \bmod p \in C$. Since $\vec{\ell}$ is concentrated on $C$, we have $\Sigma(\vec{\ell})(\vec{b}) \leq \Sigma(\vec{\ell})(\vec{c})$. For any $d \in \N_{< \dim(\vec{\ell})}$ and any $b \in B_d$, we have $w(b_d - b) \leq w(c_d - b)$. Therefore, we obtain
\be
\ell(\vec{c}) & = & \Sigma(\vec{\ell})(\vec{c}) + M \sum_{d=0}^{\dim(\vec{\ell})-1} \sum_{b \in B_d} w(c_d - b) \geq \Sigma(\vec{\ell})(\vec{b}) + M \sum_{d=0}^{\dim(\vec{\ell})-1} \sum_{b \in B_d} w(b_d - b) \\
& = & \ell(\vec{b}).
\ee
This implies that $\ell$ is $(1,C)$-reducible by Proposition \ref{optimality restriction}.

\vs
Let $\vec{c} \in C$. We have
\be
\sum_{d=0}^{\dim(\vec{\ell})-1} \sum_{b \in B_d} w(c_d - b) = \sum_{d=0}^{\dim(\vec{\ell})-1} (\# B_d - 1).
\ee
Therefore the optimality of $\vec{c}$ for $\OP(\ell,C)$ is equivalent to that for $\OP(\Sigma(\vec{\ell}),C)$. Since $\ell$ is $(1,C)$-reducible, the optimality of $\vec{c}$ for $\OP(\ell,C)$ is equivalent to that for $\OP(\ell)$. Therefore, the optimality of $\vec{c}$ for $\OP(\ell)$ is equivalent to that for $\OP(\Sigma(\vec{\ell}),C)$.
\end{proof}

We call $\ell$ in Theorem \ref{Baker} a {\it Baker forcing of $\vec{\ell}$ to $\vec{B}$}.

\section{NP-hardness}
\label{NP-hardness}

In order to discuss computational complexity of a $p$-adic optimisation, we formulate reduction of $p$-adic problems to arithmetic problems.

\vs
Let $S$ be a set of $p$-adic optimisations. We denote by $S^{\arith}$ the set of $(\ell,E) \in S \times \N$ such that $\ell$ is $(E,\N_{< p^E}^{\dim(\ell)})$-reducible, and set $\OP^{\arith}(S) \coloneqq \set{\OP(\ell,\N_{< p^E}^{\dim(\ell)})}{(\ell,E) \in S^{\arith}}$. Since $\OP^{\arith}(S)$ is a set of integer optimisation problems, which are canonically translated into integer decision problems, its NP-hardness makes sense.

\begin{dfn}
We say that $S$ is {\it NP-hard} if $\OP^{\arith}(S)$ is NP-hard.
\end{dfn}

We defined the notion of NP-hardness for a set of $p$-adic optimisations for a fixed prime number $p$, while \cite{Bak26} deals the NP-hardness of $p$-adic linear regression for varying prime numbers $p$. However, the difference of the formulation does not matter when we discuss the NP-hardness of a set of $p$-adic optimisations for a fixed prime number $p$, because our formulation just assumes a stronger condition using reducibility. In particular, the following is an extension of \cite{Bak26} Theorem 3.1:

\begin{thm}[NP-hardness of $p$-adic linear regression for each individual $p$]
\label{NP-hardness of LR}
For any prime number $p$, the $p$-adic linear regression is NP-hard, i.e.\ the set
\be
\set{\ell^{\LR}_{X,\vec{y}}}{(D,N) \in \N^2 \land (X,\vec{y}) \in (\Qp^D)^N \times \Qp^N}
\ee
of $p$-adic optimisations is NP-hard.
\end{thm}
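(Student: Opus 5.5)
The plan is to reduce the maximum cut problem to $\OP^{\arith}$ of the $p$-adic linear regressions with precision $E = 1$. The reduction is realised as a Baker forcing (Theorem \ref{Baker}) of a family of edge losses to $\vec{B} = (\ens{0,1})_{v \in V}$, with $w = \v{\cdot}$. This follows the proof of \cite{Bak26} Theorem 3.1, but keeps $p$ fixed and arbitrary.

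\emph{Construction.} Let $G = (V,\mathcal{E})$ be a finite graph, and identify $V$ with $\N_{< n}$. For each edge $e = \ens{u,v}$, set $\ell_e(\vec{c}) \coloneqq \v{1 - (c_u + c_v)}$. This is a single linear regression term, with input $1_{\ens{u,v}}$ and output $1$.

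\emph{Concentration on $C = \ens{0,1}^V$.} Take $\vec{c} \in C + p\Zp^V$ and put $\vec{b} = \vec{c} \bmod p \in \ens{0,1}^V$. Then $1 - b_u - b_v \in \ens{-1,0,1}$.
\bi
\item[(a)] If $b_u \neq b_v$, then $1 - b_u - b_v = 0$. In this case $1 - c_u - c_v \in p\Zp$, so $\ell_e(\vec{c}) < 1$ and $\ell_e(\vec{b}) = 0$.
\item[(b)] If $b_u = b_v$, then $1 - b_u - b_v = \pm 1$. In this case $1 - c_u - c_v$ is a unit, so $\ell_e(\vec{c}) = \ell_e(\vec{b}) = 1$.
\ei
This holds for every prime $p$, including $p = 2$. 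Hence the three conditions in the definition of concentration are equivalent, i.e.\ $(\ell_e)_{e \in \mathcal{E}}$ is concentrated on $C$. Moreover, $\Sigma(\vec{\ell})(\vec{b})$ equals the number of edges not cut by the partition $\vec{b}$, so $\sup \Sigma(\vec{\ell})(C) \leq \#\mathcal{E} < M \coloneqq \#\mathcal{E} + 1$.

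\emph{The forcing term and the regression instance.} The forcing term $M \sum_{v} (\v{c_v} + \v{c_v - 1})$ is itself a linear regression loss. It is obtained by repeating $M$ times each of the rows $(1_{\ens{v}}, 0)$ and $(1_{\ens{v}}, 1)$, for every $v \in V$. Thus the Baker forcing $\ell$ equals $\ell^{\LR}_{X,\vec{y}}$, where $X$ and $\vec{y}$ have $D = n$ columns and $N = \#\mathcal{E} + 2n(\#\mathcal{E}+1)$ rows, all with entries in $\ens{0,1}$. The instance is computable in time polynomial in the size of $G$.

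\emph{Applying Theorem \ref{Baker}.} The theorem shows that $\ell$ is $(1,C)$-reducible. Since $C \subset \N_{<p}^V$, the monotonicity remark after the definition of reducibility gives that $\ell$ is $(1,\N_{<p}^{V})$-reducible. Hence $(\ell,1) \in S^{\arith}$, where $S$ is the set of $p$-adic linear regressions.

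\emph{Optimal solutions are maximum cuts.} Let $\vec{c}$ be an optimal solution of $\OP(\ell,\N_{<p}^V)$.
\bi
\item[(i)] By condition (1) of reducibility, the minimum over $\N_{<p}^V$ equals the global minimum, so $\vec{c}$ is an optimal solution of $\OP(\ell)$.
\item[(ii)] By condition (2) of reducibility, $\vec{c} \in C + p\Zp^V$. Since also $\vec{c} \in \N_{<p}^V$, this forces $\vec{c} \in C$.
\item[(iii)] By the last assertion of Theorem \ref{Baker}, $\vec{c}$ is then optimal for $\OP(\Sigma(\vec{\ell}),C)$, i.e.\ it is a partition minimising the number of uncut edges, which is a maximum cut.
\ee
On $C$ the forcing term contributes exactly the constant $Mn$. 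So the optimal value equals $Mn + \#\mathcal{E} - \MaxCut(G)$, and the decision version ``$\ell \leq K$'' corresponds to ``cut $\geq Mn + \#\mathcal{E} - K$''. Hence $\MaxCut$ reduces in polynomial time to $\OP^{\arith}(S)$, and $S$ is NP-hard.

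The main obstacle is the concentration check. It must hold uniformly in $p$, and in particular for $p = 2$, where $1 + 1 = 2$ is not a unit. The choice of the affine form $1 - c_u - c_v$, rather than $c_u - c_v$, is what makes this work: its values on $\ens{0,1}^2$ lie in $\ens{-1,0,1}$, which are units or $0$ for every $p$. I expect most of the care to go into this verification, together with the precise translation of the integer optimisation into a decision problem with a suitable encoding of the threshold $K$.
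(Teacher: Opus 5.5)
Your proposal is correct and follows essentially the same route as the paper: the same edge losses $\v{1 - c_u - c_v}$, the same Baker forcing to $\ens{0,1}^n$ with $w = \v{\cdot}$ and weight $m+1$, realised as a linear regression with $m + 2(m+1)n$ rows, then Theorem \ref{Baker} together with the monotonicity of reducibility. The only difference is presentation: the paper first proves the statement for $\ell^{\MC}_{p,G}$ (Lemma \ref{MaxCut to LR}) and then notes that $\ell^{\MC}_{p,G} = \ell^{\LR}_{X_{p,G},\vec{y}_{p,G}}$, whereas you work with the regression instance directly and spell out two points the paper leaves implicit, namely why optimal solutions of $\OP(\ell,\N_{<p}^n)$ lie in $\ens{0,1}^n$ and how the decision threshold translates.
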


In order to prove Theorem \ref{NP-hardness of LR}, we prepare convention and a lemma. We denote by $\Graph$ the set of pairs $(n,e)$ of an $n \in \N$ and an injective map $e \colon \N_{< m} \hookrightarrow \set{(i_0,i_1) \in \N_{< n}^2}{i_0 < i_1}$ with $m \in \N$. For a $G = (n,e) \in \Graph$, we denote by $\MaxCut(G)$ the maximum cut problem for the unweighted undirected graph associated to $G$:
\optimise{\# \set{j \in \N_{< m}}{e(j) \in S^2 \sqcup (\N_{< n} \setminus S)^2}}{S \in \cP(\N_{< n})}
By \cite{GJS76} Theorem 1.2, the set $\MaxCut(\Graph)$ of integer optimisation problems is NP-hard. We show the NP-hardness of various classes of $p$-adic optimisations by reduction to $\MaxCut(\Graph)$.

\vs
For a $G = (n,e) \in \Graph$, we denote by $\ell^{\MC}_{p,G}$ the $p$-adic optimisation
\be
\Qp^n & \to & \R_{\geq 0} \sqcup \ens{\infty} \\
\vec{c} = (c_i)_{i=0}^{n-1} & \mapsto & \sum_{j=0}^{m-1} \v{1 - \sum_{i \in \im(E(j))} c_i} + (m+1) \sum_{i=0}^{n-1} \sum_{b=0}^{1} \v{b - c_i},
\ee
which appeared in the proof of \cite{Bak26} Theorem 3.1 under the additional assumption $p = 2$.

\begin{lmm}
\label{MaxCut to LR}
For any $G = (n,e) \in \Graph$, $\ell^{\MC}_{p,G}$ is $(1,\ens{0,1}^n)$-reducible, and a $\vec{c} \in \ens{0,1}^n$ is an optimal solution of $\OP(\ell^{\MC}_{p,G})$ if and only if $\set{i \in \N_{< n}}{c_i = 1}$ is an optimal solution of $\MaxCut(G)$.
\end{lmm}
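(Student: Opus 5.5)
The plan is to recognise $\ell^{\MC}_{p,G}$ as a Baker forcing in the sense of Theorem \ref{Baker}, and then read off the combinatorial meaning of $\Sigma(\vec{\ell})$ on $\ens{0,1}^n$. Here I read $\im(E(j))$ as the two-element set $\ens{i_0,i_1}$ of endpoints of the edge $e(j) = (i_0,i_1)$.

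\textbf{Setting up the forcing.} Take $N = m$ and, for $j \in \N_{< m}$, let $\ell_j \colon \Qp^n \to \R_{\geq 0}$ be
\[
\ell_j(\vec{c}) = \v{1 - c_{i_0} - c_{i_1}}, \qquad (i_0,i_1) = e(j).
\]
Take $B_d = \ens{0,1} \subset \N_{< p}$ for every $d \in \N_{< n}$, so that $C = \ens{0,1}^n$. Take $M = m+1$ and $w = \v{\cdot}$. Since $\v{b - c_i} = \v{c_i - b}$, the function $\ell^{\MC}_{p,G}$ is then literally the $\ell$ of Theorem \ref{Baker}. Two hypotheses must be checked.

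\textbf{The bound $\sup \Sigma(\vec{\ell})(C) < M$.} For $\vec{c} \in C$, the quantity $1 - c_{i_0} - c_{i_1}$ lies in $\ens{1,0,-1}$, so each $\ell_j(\vec{c}) \leq 1$. Hence $\Sigma(\vec{\ell})(\vec{c}) \leq m < m+1 = M$.

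\textbf{Concentration on $C$.} This is the only step with real content, and it is the one to watch for $p = 2$. Let $\vec{c} \in C + p\Zp^n$ and put $\vec{b} = \vec{c} \bmod p \in C$. Then $1 - c_{i_0} - c_{i_1} \equiv 1 - b_{i_0} - b_{i_1} \pmod{p}$. There are two cases.
\begin{itemize}
\item[(a)] Exactly one of $b_{i_0}, b_{i_1}$ equals $1$. Then $\ell_j(\vec{b}) = 0$, and $1 - c_{i_0} - c_{i_1} \in p\Zp$, so $\ell_j(\vec{c}) \leq \v{p} < 1$. Thus (1), (2) and (3) of the concentration condition all fail.
\item[(b)] Otherwise, $1 - b_{i_0} - b_{i_1} = \pm 1$, which is a $p$-adic unit for every $p$, including $p = 2$, where $-1$ is still a unit. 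Since $1 - c_{i_0} - c_{i_1}$ is congruent modulo $p$ to this unit, it is itself a unit. Hence $\ell_j(\vec{c}) = 1 = \ell_j(\vec{b})$, and (1), (2) and (3) all hold.
\end{itemize}
So (1), (2) and (3) are equivalent in every case. This uses only that $\v{\cdot}$ takes the value $1$ on units and a value less than $1$ on $p\Zp$, which holds for any normalisation.

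\textbf{Conclusion.} Theorem \ref{Baker} now gives that $\ell^{\MC}_{p,G}$ is $(1,\ens{0,1}^n)$-reducible. It also gives that $\vec{c} \in \ens{0,1}^n$ is optimal for $\OP(\ell^{\MC}_{p,G})$ if and only if it is optimal for $\OP(\Sigma(\vec{\ell}),\ens{0,1}^n)$.

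It remains to identify the objective. Put $S = \set{i}{c_i = 1}$. By case (b) with $\vec{c} = \vec{b}$, we have $\ell_j(\vec{c}) = 1$ exactly when both endpoints of $e(j)$ lie in $S$ or both lie outside $S$. Therefore
\[
\Sigma(\vec{\ell})(\vec{c}) = \# \set{j \in \N_{< m}}{e(j) \in S^2 \sqcup (\N_{< n} \setminus S)^2},
\]
which is precisely the objective minimised in $\MaxCut(G)$. Since $\vec{c} \mapsto S$ is a bijection $\ens{0,1}^n \to \cP(\N_{< n})$, optimality transfers and the lemma follows.
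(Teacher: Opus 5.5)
Your proof is correct and follows the paper's route. You exhibit $\ell^{\MC}_{p,G}$ as a Baker forcing of $\vec{\ell} = (\ell_j)_{j=0}^{m-1}$ to $B_d = \ens{0,1}$ with $M = m+1$ and $w = \v{\cdot}$, check $\sup \Sigma(\vec{\ell})(C) \leq m < m+1$ and concentration on $C$, identify $\Sigma(\vec{\ell})$ on $\ens{0,1}^n$ with the number of uncut edges, and invoke Theorem \ref{Baker}. The only cosmetic difference is in verifying concentration: you split on whether $1 - b_{i_0} - b_{i_1}$ is $0$ or $\pm 1$, while the paper uses $\v{(c_{i_0}+c_{i_1}) - (b_{i_0}+b_{i_1})} < 1$ together with $\ell_j(\vec{c} \bmod p) \in \ens{0,1}$.
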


\begin{proof}
For each $j \in \N_{< m}$, we define a $p$-adic optimisation $\ell_j$ as
\be
\Qp^n & \to & \R_{\geq 0} \sqcup \ens{\infty} \\
(c_i)_{i=0}^{n-1} & \mapsto & \v{1 - \sum_{i \in \im(E(j))} c_i}.
\ee
Set $\vec{\ell} \coloneqq (\ell_j)_{j=0}^{m-1}$ and $C \coloneqq \ens{0,1}^n$. We have $\sup \Sigma(\vec{\ell})(C) \leq m < m + 1$.

\vs
Let $(j,\vec{c}) \in \N_{< m} \times C + p \Zp^{\dim(\vec{\ell})}$. We show that the following are equivalent:
\bi
\item[(1)] $\ell_j(\vec{c}) \geq 1$
\item[(2)] $\ell_j(\vec{c} \bmod p) = 1$
\item[(3)] $\ell_j(\vec{c} \bmod p) > 0$
\ei
Set $(i_0,i_1) \coloneqq e(j)$, $(c_i)_{i=0}^{n-1} \coloneqq \vec{c}$, and $(b_i)_{i=0}^{n-1} \coloneqq \vec{c} \bmod p$. We have
\be
\ell_j(\vec{c}) & = & \v{1 - c_{i_0} - c_{i_1}} \leq 1 \\
\ell_j(\vec{c} \bmod p) & = & \v{1 - b_{i_0} - b_{i_1}} \in \ens{0,1}.
\ee
The condition (1) implies the condition (2) by $\v{(c_{i_0} + c_{1_1}) - (b_{i_0} + b_{i_1})} < 1$. The condition (2) implies the condition (3) by $1 > 0$. The condition (3) implies the condition (1) by $\ell_j(\vec{c} \bmod p) \in \ens{0,1}$ and $\v{(c_{i_0} + c_{1_1}) - (b_{i_0} + b_{i_1})} < 1$. Therefore, $\vec{\ell}$ is concentrated on $C$.

\vs
For any $\vec{c} = (c_i)_{i=0}^{n-1} \in C$, setting $S \coloneqq \set{i \in \N_{< n}}{c_i = 1}$, we have
\be
\Sigma(\vec{\ell})(\vec{c}) = \sum_{j=0}^{m-1} \v{1 - \sum_{i \in \im(E(j))} c_i} = \# \set{j \in \N_{< m}}{e(j) \in S^2 \sqcup (\N_{< n} \setminus S)^2},
\ee
and hence $\vec{c}$ is an optimal solution of $\OP(\Sigma(\vec{\ell}),C)$ if and only if $S$ is an optimal solution of $\MaxCut(G)$. Since $\ell^{\MC}_{p,G}$ is a Baker forcing of $\vec{\ell}$ to $C$, the assertion follows from Theorem \ref{Baker}.
\end{proof}

\begin{proof}[Proof of Theorem \ref{NP-hardness of LR}]
Set
\be
S & \coloneqq & \set{\ell^{\MC}_{p,G}}{G \in \Graph} \\
S' & \coloneqq & \set{\ell^{\LR}_{X,\vec{y}}}{(D,N) \in \N^2 \land (X,\vec{y}) \in (\Qp^D)^N \times \Qp^N}.
\ee
By Lemma \ref{MaxCut to LR}, we have $\set{(\ell^{\MC}_{p,G},1)}{G \in \Graph} \subset S^{\arith}$ and $\MaxCut(\Graph)$ is reduced to $\OP^{\arith}(S)$. Therefore, the NP-hardness of $\MaxCut(\Graph)$ implies that of $S$.

\vs
Let $G = (n,e) \in \Graph$. We denote by $m \in \N$ the cardinality of the domain of $e$. Set $M \coloneqq m + 2(m+1)n$. For each $j \in \N_{< M}$, we denote by $\vec{x}_{p,G,j} \in \Qp^n = \Qp^{\N_{< n}}$ the characteristic function of $\im(e(j)) \subset \N_{< n}$ if $j < m$ and of $\ens{(j-m) \bmod n} \subset \N_{< n}$ if $j \geq m$, and define $y_{p,G,j} \in \Qp$ as $1$ if $j < m$ and $\lfloor n^{-1}(j-m) \rfloor \bmod 2$ if $j \geq m$. Set $X_{p,G} \coloneqq (\vec{x}_{p,G,j})_{j=0}^{M-1} \in (\Qp^n)^{M}$ and $\vec{y}_{p,G} \coloneqq (y_{p,G,j})_{j=0}^{M-1} \in \Qp^{M}$. We have $\ell^{\MC}_{p,G} = \ell^{\LR}_{X_{p,G},\vec{y}_{p,G}} \in S'$.

\vs
We obtain $S \subset S'$. Therefore, the NP-hardness of $S$ implies that of $S'$.
\end{proof}

\begin{thm}
\label{NP-hardness of DNN}
The $2$-adic dynamic neural network with precision $1$ is NP-hard, i.e.\ the set
\be
\set{\ell^{\DNN}_{1,X,Y}}{(n,m,N) \in \N^3 \land (X,Y) \in (\Q_2^n)^N \times (\Q_2^m)^N}
\ee
of $2$-adic optimisations is NP-hard.
\end{thm}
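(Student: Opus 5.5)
The plan is to reduce $\MaxCut(\Graph)$ directly to $\OP^{\arith}$ of the given set, with $E = 1$ for every instance. The key observation concerns precision $1$. For $(A,\vec{b}) \in (\Z_2^m)^n \times \Z_2^m$ and $\vec{x} \in \ens{0,1}^n$, the $j$-th coordinate of $f^{\DNN}_{1,A,\vec{b}}(\vec{x})$ is $1$ if $\sum_i x_i A_{i,j} - b_j$ is odd and $0$ if it is even, i.e.\ it equals $(\sum_i x_i A_{i,j} - b_j) \bmod 2$. Hence $f^{\DNN}_{1,A,\vec{b}}(\vec{x})$ depends only on $(A,\vec{b}) \bmod 2$. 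Each summand $\v{y_{\alpha,j} - f_j}$ then lies in $\ens{0,1}$, and we get
\[
\ell^{\DNN}_{1,X,Y}(A,\vec{b}) = \ell^{\DNN}_{1,X,Y}((A,\vec{b}) \bmod 2)
\]
on $(\Z_2^m)^n \times \Z_2^m$, while $\ell^{\DNN}_{1,X,Y} = \infty$ outside this set.

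\textbf{Step 1 (every instance is arithmetic with $E = 1$).} I will verify the two conditions of Proposition \ref{optimality restriction} with $E = 1$ and $C = \N_{< 2}^{m(n+1)}$. Note that $C + 2\Z_2^{m(n+1)}$ is exactly the integral locus. Condition (1) holds because points outside this locus have value $\infty$, which is strictly larger than the finite value at any point of $C$. Condition (2) holds with equality by the observation above. Hence $(\ell^{\DNN}_{1,X,Y},1)$ lies in $S^{\arith}$ for every $X$ and $Y$. No Baker forcing is needed here, since the integrality constraint built into $\ell^{\DNN}$ already plays that role.

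\textbf{Step 2 (encoding $\MaxCut$).} Let $G = (n,e)$ with $m_G$ edges. I take one output neuron ($m = 1$) and $n$ inputs, so the variables are $\vec{a} = (A_{i,0})_{i} \in \ens{0,1}^n$ and $b = b_0 \in \ens{0,1}$. The samples are:
\bi
\item for each edge $j$, the input $1_{\im(e(j))}$ with output $1$;
\item $K \coloneqq m_G + 1$ copies of the input $\vec{0}$ with output $0$.
\ei
Put $e(j) = (i_0,i_1)$. Edge sample $j$ contributes $1 - ((a_{i_0} + a_{i_1} - b) \bmod 2)$, and each zero sample contributes $b$.

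If $b = 0$, the loss equals the number of edges with $a_{i_0} = a_{i_1}$. This is exactly the objective of $\MaxCut(G)$ as written in the paper, for $S = \set{i}{a_i = 1}$, and it is at most $m_G$. If $b = 1$, the loss is at least $K > m_G$.

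So every optimal solution of the integer problem has $b = 0$, and $(\vec{a},0)$ is optimal if and only if $\set{i}{a_i = 1}$ is an optimal solution of $\MaxCut(G)$. The construction of $(X,Y)$ is clearly polynomial in the size of $G$, so the NP-hardness of $\MaxCut(\Graph)$ by \cite{GJS76} transfers to $\OP^{\arith}$ of the set, and hence the set is NP-hard.

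\textbf{Main obstacle.} The only delicate point is the bias $b$. With $b = 1$ the roles of cut and uncut edges swap, so the encoding would break. The padding by $K$ zero-input samples is what forces $b = 0$, and I need to check that it does not disturb the count at $b = 0$. The remaining checks are routine: the data lie in $\ens{0,1}^n$ and $\ens{0,1}^m$ as the definition of $\ell^{\DNN}$ requires, and the edge case $m_G = 0$ is harmless.
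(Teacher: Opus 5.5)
Your proof is correct, and it reaches reducibility by a different route from the paper.

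The paper uses the mirror image of your gadget: edge samples with output $0$ and $m+1$ zero-input samples with output $1$, which pins the bias to $b_0 = 1$ rather than $0$. It then rewrites $\ell^{\DNN}_{1,X_G,Y_G} + (m+1)n$ as a Baker forcing of the edge terms to $\ens{0,1}^n \times \ens{1}$ with $w = |1_{\Q_2 \setminus 2\Z_2}(\cdot)|$. Theorem \ref{Baker} then yields both $(1,\ens{0,1}^n \times \ens{1})$-reducibility and a description of the optima of the full $2$-adic problem.

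You bypass Theorem \ref{Baker} altogether. Mod-$2$ invariance on $\Z_2^{m(n+1)}$, together with the built-in value $\infty$ off it, lets you verify Proposition \ref{optimality restriction} directly. So every precision-$1$ instance lies in $S^{\arith}$ with $E = 1$, not only the constructed ones. The bias is then pinned by a plain counting argument inside the finite problem on $\ens{0,1}^{n+1}$, which is all that NP-hardness of $\OP^{\arith}$ requires.

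Your route is more elementary and more general. It shows that the constant $(m+1)n$ the paper adds, reinterpreted as a forcing of the $A_i$ to $\ens{0,1}$, is vacuous on $\Z_2^n$ and serves only to fit the template. It also sidesteps the bookkeeping in the paper's displayed identity. There, for $b_0 \in \Z_2$, $\|(1) - f^{\DNN}_{1,A,\vec{b}}(\vec{0})\|$ equals $|1_{2\Z_2}(b_0)| = |1_{\Q_2 \setminus 2\Z_2}(b_0 - 1)|$, not $|1_{\Q_2 \setminus 2\Z_2}(b_0)|$ as printed.

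The paper's route, in exchange, treats the network within the same uniform forcing framework as linear regression and the smooth neural network.
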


In order to prove Theorem \ref{NP-hardness of DNN}, we prepare convention and a lemma. Let $G = (n,e) \in \Graph$. We denote by $m \in \N$ the cardinality of the domain of $e$. For each $j \in \N_{< 2m+1}$, we denote by $\vec{x}_{G,j} \in \Q_2^n = \Q_2^{\N_{< n}}$ the characteristic function of $\im(e(j)) \subset \N_{< n}$ if $j < m$ and of $\emptyset \subset \N_{< n}$ if $j \geq m$, and define $\vec{y}_{G,j} \in \Q_2^1 = \Q_2^{\N_{< 1}}$ as $(0)$ if $j < m$ and $(1)$ if $j \geq m$. Set $X_{G} \coloneqq (\vec{x}_{2,G,j})_{j=0}^{2m} \in (\Qp^n)^{2m+1}$ and $Y_{G} \coloneqq (\vec{y}_{G,j})_{j=0}^{2m} \in (\Q_2^1)^{2m+1}$.

\begin{lmm}
\label{MaxCut to DNN}
The $2$-adic optimisation $\ell^{\DNN}_{1,X_{G},Y_{G}}$ is $(1,\ens{0,1}^n \times \ens{1})$-reducible, and for any $A = ((A_i))_{i=0}^{n-1} \in (\ens{0,1}^1)^n$, $(A,(1))$ is an optimal solution of $\OP(\ell^{\DNN}_{1,X_{G},Y_{G}})$ if and only if $\set{i \in \N_{< n}}{A_i = 1}$ is an optimal solution of $\MaxCut(G)$.
\end{lmm}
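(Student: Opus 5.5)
I will verify the two hypotheses of Proposition \ref{optimality restriction} with $E = 1$ and $C \coloneqq \ens{0,1}^n \times \ens{1}$. I will not use Theorem \ref{Baker}: here the forcing term is not of the form $M \sum w(c_d - b)$, but is produced by the $m+1$ dummy samples with zero input. Throughout, the output dimension is $1$, so $\dim(\ell^{\DNN}_{1,X_G,Y_G}) = n+1$ and a point of the domain is $(A,b)$ with $A = ((A_i))_{i=0}^{n-1}$ and $b \in \Q_2$.

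\textbf{Step 1: explicit form of the loss on $\Z_2^{n+1}$.} Since $E = 1$, for $(A,b) \in \Z_2^{n} \times \Z_2$ the network output $f^{\DNN}_{1,A,(b)}(\vec{x}) \in \ens{0,1}$ equals $1$ exactly when $\sum_i x_i A_i - b$ is odd. Each summand $\v{y_\alpha - f(\vec{x}_\alpha)}$ is therefore $0$ or $1$, according to whether the prediction is right. For $j < m$ with $e(j) = (i_0,i_1)$, the target is $0$, so the summand is $1$ iff $A_{i_0} + A_{i_1} - b$ is odd. For the $m+1$ samples $j \geq m$, the input is $\vec{0}$ and the target is $1$, so the summand is $1$ iff $b$ is even. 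The loss thus depends only on $(A,b) \bmod 2$, which gives hypothesis (2) of Proposition \ref{optimality restriction} with equality, since $(A,b) \bmod 2$ lies in $\ens{0,1}^{n+1}$.

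\textbf{Step 2: hypothesis (1).} Take $(A,b) \notin C + 2\Z_2^{n+1}$. There are two cases.
\begin{itemize}
\item If $(A,b) \notin \Z_2^{n+1}$, the loss is $\infty$, while every point of $C$ has loss at most $m$.
\item Otherwise $b$ is even, and the $m+1$ dummy samples alone contribute $m+1 > m$. Any $\vec{d} \in C$ has loss at most $m$, since with $b = 1$ the dummy samples contribute $0$ and only the $m$ edge samples can count.
\end{itemize}
Hence $\ell^{\DNN}_{1,X_G,Y_G}$ is $(1,C)$-reducible.

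\textbf{Step 3: identification with $\MaxCut(G)$.} For $(A,(1)) \in C$, Step 1 shows that the edge sample $j$ contributes $1$ iff $A_{i_0} + A_{i_1}$ is even, i.e.\ iff $e(j) \in S^2 \sqcup (\N_{<n} \setminus S)^2$ with $S = \set{i}{A_i = 1}$. The dummy samples contribute $0$. So the loss on $C$ is exactly the objective of $\MaxCut(G)$ at $S$, and $A \mapsto S$ is a bijection from $C$ to $\cP(\N_{<n})$. It remains to show that optimality for $\OP(\ell^{\DNN}_{1,X_G,Y_G})$ and for $\OP(\ell^{\DNN}_{1,X_G,Y_G}, C)$ coincide on $C$. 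Condition (1) of reducibility gives that an optimum over $C$ is a global optimum. Conversely, a global optimum lying in $C$ is trivially optimal over $C$, and it satisfies $\vec{c} \bmod 2 = \vec{c}$ consistently with condition (2).

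None of these steps is deep. The step needing the most care is the case analysis in Step 2, specifically checking that the $m+1$ copies of the zero-input sample strictly dominate the at most $m$ edge errors; this is exactly why $2m+1$ samples are used. A secondary point is bookkeeping the identification $\Q_2^{1\cdot(n+1)} \cong (\Q_2^1)^n \times \Q_2^1$, so that $C$ is regarded as a subset of $\N_{<2}^{n+1}$.
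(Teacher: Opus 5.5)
Your proof is correct, but it takes a more direct route than the paper. The paper does not check the hypotheses of Proposition \ref{optimality restriction} by hand. Instead it does the following.
\begin{itemize}
\item It takes the edge losses $\ell_j$ ($j<m$), notes $\sup \Sigma(\vec{\ell})(C) \leq m < m+1$, and checks that $\vec{\ell}$ is concentrated on $C$ (each $\ell_j$ is a $\ens{0,1}$-valued parity function on $\Z_2^{n+1}$).
\item It identifies $\ell^{\DNN}_{1,X_G,Y_G} + (m+1)n$ as a Baker forcing of $\vec{\ell}$ to $(\ens{0,1},\ldots,\ens{0,1},\ens{1})$ with $M = m+1$ and $w = \v{1_{\Q_2 \setminus 2\Z_2}(\cdot)}$. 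The $m+1$ zero-input samples supply the bias term $(m+1)\,w(b_0 - 1)$. The forcing terms for the $A_i$ are inserted as the constant $(m+1)n$, since $\sum_{b=0}^{1} w(A_i - b) = 1$ for every $A_i \in \Z_2$.
\item Theorem \ref{Baker} then gives reducibility and the optimality equivalence at once.
\end{itemize}
So your remark that Theorem \ref{Baker} is unavailable is too pessimistic. It applies after an additive constant shift, which changes neither reducibility nor the set of optimal solutions.

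The paper's route buys uniformity. The dynamic neural network becomes one more instance of the forcing theorem used for linear regression and smooth neural networks, and your ``$m+1 > m$'' domination argument is absorbed into the hypothesis $\sup \Sigma(\vec{\ell})(C) < M$.

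Your route buys self-containedness: there is no concentration check and no constant padding. It also buys robustness. You treat points outside $\Z_2^{n+1}$ directly via the value $\infty$. The paper's identity with the Baker forcing instead relies on $\Sigma(\vec{\ell}) = \infty$ there, and this breaks down for $m = 0$. In that case $\vec{\ell}$ is empty, $\dim(\vec{\ell}) = 0$, and Theorem \ref{Baker} does not apply on $\Q_2^{n+1}$ at all, whereas your argument covers it unchanged.
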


\begin{proof}
Set $C \coloneqq \ens{0,1}^n \times \ens{1}$. For each $j \in \N_{< m}$, we denote by $\ell_j$ the $2$-adic optimisation
\be
\Q_2^{m(n+1)} \cong (\Q_2^m)^n \times \Q_2^m & \to & \R_{\geq 0} \sqcup \ens{\infty} \\
(A,\vec{b}) & \mapsto & 
\left\{
\begin{array}{ll}
\sum_{j=0}^{m-1} \n{f_{A,\vec{b}}(\vec{x}_j)} & ((A,\vec{b}) \in (\Z_2^1)^n \times \Z_2^1) \\
\infty & ((A,\vec{b}) \notin (\Z_2^1)^n \times \Z_2^1)
\end{array}
\right..
\ee
Set $\vec{\ell} = (\ell_j)_{j=0}^{m-1}$. By $C \subset \Z_2^n \times \Z_2^1$, we have $\sup \Sigma(\vec{\ell})(C) \leq m < m + 1$.

\vs
Let $A = ((A_i))_{i=0}^{n-1} \in (\Q_2^1)^n$ and $\vec{b} \in \Q_2^1$. We denote by $b_0 \in \Z_2$ the unique entry of $\vec{b}$. If $(A,\vec{b}) \notin (\Z_2^1)^n \times \Z_2^1$, then we have
\be
\Sigma(\vec{\ell})(A,\vec{b}) = \ell^{\DNN}_{1,X_{G},Y_{G}}(A,\vec{b}) = \infty > \Sigma(\vec{\ell})(((0))_{i=0}^{n-1},(1)).
\ee
Set $S \coloneqq \set{i \in \N_{< n}}{A_i = 1}$. If $(A,\vec{b}) \in (\Z_2^1)^n \times \Z_2^1$, then for any $j \in \N_{< m}$, setting $(i_0,i_1) \coloneqq e(j)$, we have
\be
\ell_j(A,\vec{b}) & = & \n{(0) - f_{A,\vec{b}}(\vec{x}_j)} = \v{-1 + 1_{2^1 \Z_2} \left( \left( \sum_{i \in \im(e(j))} A_i \right) - b_0 \right)} \\
& = & \v{1_{\Q_2 \setminus 2 \Z_2}(A_{i_0} + A_{i_1} - b_0)} =
\left\{
\begin{array}{ll}
0 & (A_{i_0} \bmod 2 = (A_{i_1} + b_0) \bmod 2) \\
1 & (A_{i_0} \bmod 2 \neq (A_{i_1} + b_0) \bmod 2)
\end{array}
\right.
\ee
Therefore, $\vec{\ell}$ is concentrated on $\ens{0,1}^n \times \ens{1}$, and $(\vec{A},\vec{b})$ is an optimal solution of $\OP(\Sigma(\vec{\ell}),C)$ if and only if $(\vec{A},\vec{b}) \in C$ and $S$ is an optimal solution of $\MaxCut(G)$. We have
\be
\ell^{\DNN}_{1,X_{G},Y_{G}} + (m+1)n & = & \Sigma(\vec{\ell}) + \sum_{j=0}^{m-1} \n{(1) - f_{A,\vec{b}}((0)_{i=0}^{n-1})} + (m+1)n \\
& = & \Sigma(\vec{\ell}) + (m+1) \v{1 - 1_{2^1 \Z_2}(-b_0)} + (m+1)n \\
& = & \Sigma(\vec{\ell}) + (m+1) \left( \v{1_{\Q_2 \setminus 2 \Z_2}(b_0)} + n \right) \\
& = & \Sigma(\vec{\ell}) + (m+1) \left( \v{1_{\Q_2 \setminus 2 \Z_2}(b_0)} + \sum_{i=0}^{n-1} \sum_{b=0}^{1} \v{1_{\Q_2 \setminus 2 \Z_2}(A_i - b)} \right).
\ee
Therefore, $\ell^{\DNN}_{1,X_{G},Y_{G}} + (m+1)n$ is a Baker forcing of $\vec{\ell}$ to $\ens{0,1}^{n} \times \ens{1}$. By Theorem \ref{Baker}, $\ell^{\DNN}_{1,X_{G},Y_{G}}$ is $(1,C)$-reducible, and $(A,\vec{b})$ is an optimal solution of $\OP(\ell^{\DNN}_{1,X_{G},Y_{G}})$ if and only if it is an optimal solution of $\OP(\Sigma(\vec{\ell}),C)$, i.e.\ $(\vec{A},\vec{b}) \in C$ and $S$ is an optimal solution of $\MaxCut(G)$.
\end{proof}

\begin{proof}[Proof of Theorem \ref{NP-hardness of DNN}]
Set $S \coloneqq \set{\ell^{\DNN}_{1,X,Y}}{(n,N) \in \N^2 \land (X,Y) \in (\Q_2^n)^N \times (\Q_2^1)^N}$. By Lemma \ref{MaxCut to DNN}, we have $\set{(\ell^{\DNN}_{1,\tl{X}_G,\tl{Y}_G},1)}{G \in \Graph} \subset S^{\arith}$ and $\MaxCut(\Graph)$ is reduced to $\OP^{\arith}(S)$. Therefore, the NP-hardness of $\MaxCut(\Graph)$ implies that of $S$, and hence that of $\set{\ell^{\DNN}_{1,X,Y}}{(n,m,N) \in \N^3 \land (X,Y) \in (\Q_2^n)^N \times (\Q_2^m)^N}$.
\end{proof}

\begin{thm}[NP-hardness of $p$-adic smooth neural network]
\label{NP-hardness of SNN}
For any prime number $p$, the $p$-adic smooth neural network is NP-hard, i.e. the set
\be
\set{\ell^{\SNN}_{B,X,\vec{y}}}{(D,N) \in \N^2 \land (B,X,\vec{y}) \in B_{p,D} \times (\Qp^D)^N \times \Qp^N}
\ee
of $p$-adic optimisations is NP-hard.
\end{thm}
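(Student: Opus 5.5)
The plan is to follow the proof of Theorem \ref{NP-hardness of LR}: realise each $\ell^{\MC}_{p,G}$ literally as a $p$-adic smooth neural network, and then use the inclusion $S \subset S''$ of sets of $p$-adic optimisations. Here
\[
S \coloneqq \set{\ell^{\MC}_{p,G}}{G \in \Graph}, \qquad
S'' \coloneqq \set{\ell^{\SNN}_{B,X,\vec{y}}}{(D,N) \in \N^2 \land (B,X,\vec{y}) \in B_{p,D} \times (\Qp^D)^N \times \Qp^N}.
\]
By Lemma \ref{MaxCut to LR}, the proof of Theorem \ref{NP-hardness of LR} already shows that $\set{(\ell^{\MC}_{p,G},1)}{G \in \Graph} \subset S^{\arith}$ and that $\MaxCut(\Graph)$ reduces to $\OP^{\arith}(S)$. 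Hence $S$ is NP-hard. Once $S \subset S''$ is established, every $(\ell,E) \in S^{\arith}$ is also in $S''^{\arith}$, so $\OP^{\arith}(S) \subset \OP^{\arith}(S'')$ and the NP-hardness of $S''$ follows.

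It remains to show $S \subset S''$. Fix $G = (n,e) \in \Graph$ with $m$ edges, and let $X_{p,G} = (\vec{x}_{p,G,j})_{j=0}^{M-1}$ and $\vec{y}_{p,G}$ be as in the proof of Theorem \ref{NP-hardness of LR}. Each $\vec{x}_{p,G,j} \in \ens{0,1}^n$. Since $\ell^{\MC}_{p,G} = \ell^{\LR}_{X_{p,G},\vec{y}_{p,G}}$, it suffices to find a $D$, a support $B \in B_{p,D}$ with $\# \dom(B) = n$, and data $X' \in (\Qp^D)^M$ satisfying
\[
1_{B(i)}(\vec{x}'_j) = (\vec{x}_{p,G,j})_i \quad \text{for all } (i,j) \in \N_{< n} \times \N_{< M}.
\]
Then $\sum_{i} c_i 1_{B(i)}(\vec{x}'_j) = \vec{c} \cdot \vec{x}_{p,G,j}$ for every $\vec{c} \in \Qp^n$, which gives $\ell^{\SNN}_{B,X',\vec{y}_{p,G}} = \ell^{\LR}_{X_{p,G},\vec{y}_{p,G}} = \ell^{\MC}_{p,G}$.

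For the construction I would take $D \coloneqq n$ and $B(i) \coloneqq \set{\vec{z} \in \Zp^n}{z_i \in 1 + p\Zp}$. This set is clopen, being the preimage of the clopen set $1 + p\Zp$ under a coordinate projection. I would then simply set $\vec{x}'_j \coloneqq \vec{x}_{p,G,j}$. Each entry of $\vec{x}'_j$ is $0$ or $1$, and $0 \notin 1 + p\Zp$ while $1 \in 1 + p\Zp$, so $1_{B(i)}(\vec{x}'_j) = (\vec{x}'_j)_i$ as required. An even simpler alternative is $D = 1$: encode sample $j$ by the integer $\vec{x}'_j \coloneqq j$ and let $B(i) \coloneqq \bigsqcup_{j : (\vec{x}_{p,G,j})_i = 1} (j + p^K\Zp)$ with $p^K > M$. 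This is a finite union of balls, hence clopen, and distinct $j < M$ lie in distinct balls.

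There is no serious obstacle in this argument. The only points to check are that the $B(i)$ are clopen and that the construction is polynomial in the size of $G$, so that the reduction is genuinely polynomial-time. Each $B(i)$ is described by a single coordinate condition (or by a list of at most $M$ balls in the $D = 1$ variant), and $M = m + 2(m+1)n$ is polynomial in $n$ and $m$, so both points hold.
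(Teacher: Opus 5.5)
Your proposal is correct and is essentially the paper's argument. Your $D = 1$ variant is exactly the paper's $B_{p,G}$: the set $J_{p,G,i}$ is precisely $\{j : (\vec{x}_{p,G,j})_i = 1\}$, the paper takes $p^E \geq M$, and it encodes sample $j$ as $(j)$. The paper's Lemma~\ref{MaxCut to SNN} verifies the same identity $\ell^{\SNN} = \Sigma(\vec{\ell}) + (m+1)\sum_{i}\sum_{b}\v{b - c_i} = \ell^{\MC}_{p,G}$ and then re-invokes Theorem~\ref{Baker}, whereas you reuse Lemma~\ref{MaxCut to LR} through the inclusion $S \subset S''$; your $D = n$ encoding with $B(i) = \{\vec{z} \in \Zp^n : z_i \in 1 + p\Zp\}$ is an equally valid, slightly simpler alternative.
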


In order to prove Theorem \ref{NP-hardness of SNN}, we prepare convention and a lemma. Let $G = (n,e) \in \Graph$. We denote by $m \in \N$ the cardinality of the domain of $e$. Set $M \coloneqq m + 2(m+1)n \in \N$ and $E \coloneqq \lceil \log_p \max \ens{1,M} \rceil \in \N$. For each $i \in \N_{< n}$, set
\be
J_{p,G,i} \coloneqq \set{j \in \N_{< M}}{(j < m \to i \in \im(e(j))) \land (j \geq m \to (j - m) \bmod n = i}.
\ee
We define $B_{p,G} \in B_{p,1}$ as the map
\be
\N_{< n} & \to & \CO(\Zp^1) \\
i & \mapsto & \bigsqcup_{j \in J_{p,G,i}} (j + p^E \Zp).
\ee
Set $X_{p,G} \coloneqq (1_{\ens{j}})_{j=0}^{M-1} \in (\Qp^M)^M$. For each $j \in \N_{< M}$, we define $y_{p,G,j} \in \Qp$ as $1$ if $j < m$ and $\lfloor n^{-1}(j-m) \rfloor \bmod 2$ if $j \geq m$. Set $\vec{y}_{p,G} \coloneqq (y_{p,G,j})_{j=0}^{M-1} \in \Qp^M$. Although the convention conflicts that for the proof of Theorem \ref{NP-hardness of LR}, we will not use it anymore.

\begin{lmm}
\label{MaxCut to SNN}
For any $G \in \Graph$, $\ell^{\SNN}_{B_{p,G},X_{p,G},\vec{y}_{p,G}}$ is $(1,\ens{0,1}^n)$-reducible, and a $\vec{c} = (c_i)_{i=0}^{n-1} \in \ens{0,1}^n$ is an optimal solution of $\OP(\ell^{\SNN}_{B_{p,G},X_{p,G},\vec{y}_{p,G}},\ens{0,1}^n)$ if and only if $\set{i \in \N_{< n}}{c_i = 1}$ is an optimal solution of $\MaxCut(G)$.
\end{lmm}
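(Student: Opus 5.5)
The plan is to show that $\ell^{\SNN}_{B_{p,G},X_{p,G},\vec{y}_{p,G}}$ coincides, as a function on $\Qp^n$, with the $p$-adic optimisation $\ell^{\MC}_{p,G}$. Then both assertions follow from Lemma \ref{MaxCut to LR}. The identification should go through the linear regression data used in the proof of Theorem \ref{NP-hardness of LR}, where we already saw $\ell^{\MC}_{p,G} = \ell^{\LR}_{X,\vec{y}}$ for explicit data $X,\vec{y}$. So it suffices to check that the smooth neural network reproduces exactly those inner products $\vec{c} \cdot \vec{x}_j$ and outputs $y_j$.

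The first step is to compute $1_{B_{p,G}(i)}$ at the $j$-th input point. Since $D = 1$, the supports $B_{p,G}(i)$ are clopen subsets of $\Zp$, so I will read the $j$-th sample point as the integer $j \in \N_{< M} \subset \Zp$. I regard the vector $1_{\ens{j}}$ as an encoding of the index $j$, and this is the reading I would fix in the proof. By the choice $E = \lceil \log_p \max\ens{1,M} \rceil$ we have $M \leq p^E$, so distinct elements of $\N_{< M}$ lie in distinct residue classes modulo $p^E$. Hence $j \in \bigsqcup_{j' \in J_{p,G,i}} (j' + p^E \Zp)$ holds if and only if $j \in J_{p,G,i}$. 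The degenerate case $M = 0$ is trivial, since then $m = n = 0$. Consequently
\be
\sum_{i=0}^{n-1} c_i 1_{B_{p,G}(i)}(j) = \sum_{i \in \N_{< n},\, j \in J_{p,G,i}} c_i .
\ee
Unwinding the definition of $J_{p,G,i}$, this sum is $\sum_{i \in \im(e(j))} c_i$ when $j < m$, and $c_{(j-m) \bmod n}$ when $j \geq m$. In both cases it equals $\vec{c} \cdot \vec{x}_{p,G,j}$ for the vectors $\vec{x}_{p,G,j}$ of the proof of Theorem \ref{NP-hardness of LR}. The outputs $y_{p,G,j}$ are literally the same as there.

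It follows that $\ell^{\SNN}_{B_{p,G},X_{p,G},\vec{y}_{p,G}} = \ell^{\LR}_{X_{p,G},\vec{y}_{p,G}}$, with the latter in the sense of that earlier proof, and this equals $\ell^{\MC}_{p,G}$. For completeness I would re-verify the last equality. As $k = j - m$ runs over $\N_{< 2(m+1)n}$, the pair $(k \bmod n, \lfloor k/n \rfloor \bmod 2)$ hits each $(i,b) \in \N_{< n} \times \ens{0,1}$ exactly $m+1$ times. This produces the penalty $(m+1)\sum_i \sum_{b=0}^{1} \v{b - c_i}$, and the first $m$ terms give $\sum_j \v{1 - \sum_{i \in \im(e(j))} c_i}$.

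Finally I apply Lemma \ref{MaxCut to LR}, which gives $(1,\ens{0,1}^n)$-reducibility. For the optimality statement, note the following consequence of condition (1) of reducibility: the optimal value of $\OP(\ell,\ens{0,1}^n)$ equals the global optimal value. Hence a $\vec{c} \in \ens{0,1}^n$ is optimal for $\OP(\ell,\ens{0,1}^n)$ if and only if it is optimal for $\OP(\ell)$, and by the lemma this holds if and only if $\set{i}{c_i = 1}$ is an optimal solution of $\MaxCut(G)$. I expect the main obstacle to be the first step: correctly interpreting the sample points and making sure the residue classes $j + p^E\Zp$ separate the indices, which is exactly where the choice of $E$ is used. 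Everything after that is bookkeeping already done in the proof of Theorem \ref{NP-hardness of LR}.
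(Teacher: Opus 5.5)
Your proposal is correct and is essentially the paper's argument. The paper also computes each summand to obtain $\ell^{\SNN}_{B_{p,G},X_{p,G},\vec{y}_{p,G}}(\vec{c}) = \Sigma(\vec{\ell})(\vec{c}) + (m+1)\sum_{i=0}^{n-1}\sum_{b=0}^{1}\v{b - c_i}$, which is exactly $\ell^{\MC}_{p,G}$, but then invokes Theorem \ref{Baker} directly rather than Lemma \ref{MaxCut to LR}; this amounts to the same thing. You additionally spell out steps the paper leaves implicit: reading the $j$-th sample point as $j \in \Zp$, using $M \leq p^E$ to separate the residue classes, and passing from optimality for $\OP(\ell)$ to optimality for $\OP(\ell,\ens{0,1}^n)$.
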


\begin{proof}
We use the same convention of $\vec{\ell}$ and $C$ as in the proof of Lemma \ref{MaxCut to LR}. We have shown the following:
\bi
\item[(1)] The inequality $\sup \Sigma(\vec{\ell})(C) < m + 1$ holds.
\item[(2)] The sequence $\vec{\ell}$ is concentrated on $C$.
\item[(3)] A $(c_i)_{i=0}^{n-1} \in \Qp^n$ is an optimal solution of $\OP(\Sigma(\vec{\ell}),C)$ if and only if $\set{i \in \N_{< n}}{c_i = 1}$ is an optimal solution of $\MaxCut(G)$.
\ei
Let $\vec{c} = (c_i)_{i=0}^{n-1} \in \Qp^n$. For any $j \in \N_{< M}$, we have
\be
\v{y_{p,G,j} - \sum_{i=0}^{n-1} c_i 1_{B_{p,G}(i)}((j))} =
\left\{
\begin{array}{ll}
\ell_j(\vec{c}) & (j < m) \\ 
\Bigv{\left( \left\lfloor \frac{j - m}{n} \right\rfloor \bmod 2 \right) - c_{(j - m) \bmod n}} & (j \geq m)
\end{array}
\right..
\ee
Therefore, we have
\be
\ell^{\SNN}_{B_{p,G},X_{p,G},\vec{y}_{p,G}}(\vec{c}) & = & \Sigma(\vec{\ell})(\vec{c}) + (m+1) \sum_{i=0}^{n-1} \sum_{b=0}^{1} \v{b - c_i},
\ee
i.e.\ $\ell^{\SNN}_{B_{p,G},X_{p,G},\vec{y}_{p,G}}$ is a Baker forcing of $\vec{\ell}$ to $C$, and hence the assertion follows from Theorem \ref{Baker}.
\end{proof}

\begin{proof}[Proof of Theorem \ref{NP-hardness of SNN}]
Set $S \coloneqq \set{\ell^{\SNN}_{B,X,\vec{y}}}{M \in \N \land (B,X,\vec{y}) \in B_{p,1} \times (\Qp^M)^M \times \Qp^M}$. By Lemma \ref{MaxCut to SNN}, we have $\set{(\ell^{\SNN}_{B_{p,G},X_{p,G},\vec{y}_{p,G}},1)}{G \in \Graph} \subset S^{\arith}$ and $\MaxCut(\Graph)$ is reduced to $\OP^{\arith}(S)$. Therefore, the NP-hardness of $\MaxCut(\Graph)$ implies that of $S$, and hence that of $\set{\ell^{\SNN}_{B,X,\vec{y}}}{(D,N) \in \N^2 \land (B,X,\vec{y}) \in B_{p,D} \times (\Qp^D)^N \times \Qp^N}$.
\end{proof}

\vspace{0.3in}
\addcontentsline{toc}{section}{Acknowledgements}
\noindent {\Large \bf Acknowledgements}
\vspace{0.2in}

\noindent
I thank G.\ D.\ Baker for informing me of various techniques in $p$-adic optimisations. I thank all people who helped me to learn mathematics and programming. I also thank my family.

\end{document}